\documentclass[a4paper,10pt]{article}
\setlength{\textwidth}{15.93cm} \setlength{\textheight}{24.61cm}
\oddsidemargin=-0.0cm \topmargin=-1.33cm
\usepackage{charter,eulervm}

\usepackage{amsthm,amsmath,amssymb,upgreek,marvosym}
\usepackage{makeidx}  
\usepackage{paralist}
\usepackage{subfig}
\usepackage{tabularx}
\usepackage[usenames,dvipsnames]{color}
\usepackage[pdftex,breaklinks,colorlinks,
    citecolor={blue}, linkcolor={blue},urlcolor=Maroon]{hyperref}
\usepackage{tkz-graph}
\usetikzlibrary{arrows,shapes,decorations.pathmorphing}

\theoremstyle{plain} 
\newtheorem{theorem}{Theorem}[section]
\newtheorem{lemma}[theorem]{Lemma}

\newtheorem{proposition}[theorem]{Proposition}

\def\boxit#1{\vbox{\hrule\hbox{\vrule\kern4pt
  \vbox{\kern1pt#1\kern1pt}
\kern2pt\vrule}\hrule}}

\newcommand{\keywords}[1]{\bigskip \par\noindent
{\small{\em Keywords\/}: #1}}
\newcommand{\opt}[2]{\ensuremath{{\mathtt{#1}(#2)}}}
\newcommand{\comment}[1]{\`$\setminus\!\!\setminus${\em #1}}

\begin{document}

\title{Approximate Association via Dissociation\thanks{ Supported in
    part by the National Natural Science Foundation of China (NSFC)
    under grants 61232001, 61472449, 61572414, and 61420106009, and
    the Hong Kong Research Grants Council (RGC) under grant PolyU
    252026/15E.}}

\author{Jie You\thanks{School of Information Science and Engineering,
    Central South University, Changsha, China.} \thanks{Department of
    Computing, Hong Kong Polytechnic University, Hong Kong, China.}
  \and
  \addtocounter{footnote}{-2} Jianxin Wang\footnotemark
  \and
  Yixin Cao\footnotemark
}

\date{}
\maketitle

\begin{abstract}
  A vertex set $X$ of a graph $G$ is an association set if each component of $G - X$ is a clique, or a dissociation set if each component of $G - X$ is a single vertex or a single edge. Interestingly, $G - X$ is then precisely a graph containing no induced $P_3$'s or containing no $P_3$'s, respectively.  We observe some special structures and show that if none of them exists, then the minimum association set problem can be reduced to the minimum (weighted) dissociation set problem.  This yields the first nontrivial approximation algorithm for association set, and its approximation ratio is 2.5, matching the best result of the closely related cluster editing problem.  The reduction is based on a combinatorial study of modular decomposition of graphs free of these special structures.  Further, a novel algorithmic use of modular decomposition enables us to implement this approach in $O(m n + n^2)$ time.

  \keywords{approximation algorithm, association set (cluster vertex deletion),
    dissociation set, cluster graph, modular decomposition,
    triangle-free graph.}
\end{abstract}

\section{Introduction}
A {\em cluster graph} comprises a family of disjoint cliques, each an
association.  Cluster graphs have been an important model in the
study of clustering objects based on their pairwise similarities,
particularly in computational biology and machine learning
\cite{ben-dor-99-clustering-gene}.  If we represent each object with a
vertex, and add an edge between two objects that are similar, we would
expect a cluster graph.  If this fails, a natural problem is then to
find and exclude a minimum number of vertices such that the rest forms
a cluster graph; this is the {\em association set} problem.  This
problem has recently received significant interest from the community
of parameterized computation, where it is more commonly called {\em
  cluster vertex deletion}
\cite{huffner-10-cluster-vertex-deletion,boral-14-cluster-vertex-deletion}.
The cardinality of a minimum association set of a graph is also known
as its {\em distance to clusters}.  It is one of the few structural
parameters for dense graphs
\cite{doucha-12-cluster-vertex-deletion-as-parameter,chopin-14-target-set-selection},
in contrast with a multitude of structural parameters for sparse
graphs, thereby providing another motivation for this line of
research.  For example, Bruhn et al.~\cite{bruhn-14-boxicity} recently
showed that the boxicity problem (of deciding the minimum $d$ such
that a graph $G$ can be represented as an intersection graph of
axis-aligned boxes in the $d$-dimension Euclidean space) is
fixed-parameter tractable in the vertex cover number.

The association set problem belongs to the family of vertex deletion
problems studied by
Yannakakis~\cite{lewis-80-node-deletion-np,lund-93-approximation-maximum-subgraph}.
The purpose of these problems is to delete the minimum number of
vertices from a graph so that the remaining subgraph satisfies a
hereditary property (i.e., a graph property closed under taking
induced subgraphs) \cite{lewis-80-node-deletion-np}.  It is known that
a hereditary property can be characterized by a (possibly infinite)
set of forbidden induced subgraphs.  In our case, the property is
``being a cluster graph,'' and the forbidden induced subgraphs are
$P_3$'s (i.e., paths on three vertices).  A trivial approximation
algorithm of ratio $3$ can be derived as follows.  We search for
induced $P_3$'s, and if find one, delete all the three vertices.  This
trivial upper bound is hitherto the best known.  Indeed, this is a
simple application of Lund and Yannakakis's
observation~\cite{lund-93-approximation-maximum-subgraph}, which
applies to all graph classes with finite forbidden induced subgraphs.

Closely related is the cluster editing problem, which allows us to
use, instead of vertex deletions, both edge additions and deletions
\cite{ben-dor-99-clustering-gene}.  Approximation algorithms of the
cluster editing problem have been intensively studied, and the current
best approximation ratio is 2.5
\cite{bansal-04-correlation-clustering,charikar-04-approximate-clustering,ailon-08-aggregating-inconsistent-information}.
Our main result is the first nontrivial approximation algorithm for
the association set problem, with a ratio matching the best ratio of
the closely related cluster editing problem.  As usual, $n$ and $m$
denote the numbers of vertices and edges respectively in the input
graph.  Without loss of generality, we assume throughout the paper
that the input graph contains no isolated vertices (vertices of degree
0), hence $n = O(m)$.
\begin{theorem}\label{thm:alg-approx}
  There is an $O(m n)$-time approximation algorithm of ratio~$2.5$ for
  the association set problem.
\end{theorem}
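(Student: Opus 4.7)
My plan is to combine three ingredients foreshadowed in the abstract: (a) a short list of local reduction rules for the ``special structures,'' (b) a structural reduction, based on modular decomposition, from association set on special-structure-free graphs to weighted dissociation set on a triangle-free auxiliary graph, and (c) a $2.5$-approximation for weighted dissociation set on that auxiliary graph. The key conceptual observation that makes the ratio $2.5$ achievable is that on a triangle-free graph, the induced-$P_3$-free subgraphs are exactly the $P_3$-free subgraphs (cliques of size~$\ge 3$ are forbidden), so association set and dissociation set coincide; thus once the modular quotient becomes triangle-free, we are free to invoke the weighted dissociation machinery.

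I would begin by enumerating the special structures (small configurations, likely including certain twin/module-like patterns and small forbidden subgraphs) and designing reduction rules for each. For every rule I need a charging argument showing that if it removes a set $D$ into the approximate solution, then $|D| \le 2.5\cdot|D\cap X^*|$ for some minimum association set $X^*$, so that the ratio is preserved under recursion. These rules should either force vertices into the solution or collapse a local module into a lighter representative, leaving an equivalent instance on fewer vertices.

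Once no special structure remains, I would compute the modular decomposition and argue (this is the structural heart of the paper, and the first main obstacle) that the quotient of every prime node is triangle-free, and that each non-prime module can be handled locally and replaced by a single weighted vertex whose weight is the internal deletion cost. This produces a weighted triangle-free graph $H$ such that a minimum association set of $G$ is obtained from a minimum (weighted) dissociation set of $H$ by adding internal deletions of weight exactly the cost of $H$'s dissociation solution. A $2.5$-approximate dissociation set of $H$ therefore lifts to a $2.5$-approximate association set of $G$.

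The remaining step is to approximate weighted dissociation set on $H$ within factor $2.5$; I would either cite or adapt an LP-rounding/local-ratio approach for $P_3$-vertex-deletion, which is well-behaved on triangle-free graphs because every $P_3$ corresponds to a vertex together with two neighbors. For the $O(mn+n^2)$ running time promised in Theorem~\ref{thm:alg-approx}, modular decomposition takes $O(n+m)$ time, the exhaustive application of the reduction rules can be done in $O(mn)$ time by maintaining, for each vertex, a dynamic incidence structure against the finitely many forbidden patterns, and the dissociation approximation fits in the same budget. The main obstacle I anticipate is the structural claim in paragraph three: proving that the absence of the special structures is exactly what forces every prime quotient to be triangle-free. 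The charging analysis for the reduction rules is the second most delicate point, since a naive local accounting typically gives ratio~$3$ and one must exploit the weighted dissociation lower bound to shave it down to $2.5$.
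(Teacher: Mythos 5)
Your overall architecture---delete a finite list of small special structures, then show that their absence forces the modular quotient to be triangle-free (or a trivially solvable shape), and finish by running a weighted dissociation-set approximation on the weighted quotient---is the paper's plan. But two of your steps have genuine gaps. First, the reduction to the auxiliary weighted graph is not what you describe. After the special structures (in the paper: $C_4$, bull, dart, fox, gem) are eliminated, every maximal strong module induces a cluster graph; if such a module is \emph{not} a clique it cannot be collapsed to a weighted vertex at all---the paper instead proves it has a unique external neighbor and deletes that neighbor, recursing on the rest. Once every maximal strong module is a clique, the correct weight of the corresponding quotient vertex is the \emph{size} of the module (the cost of deleting it wholesale), which works because a minimum association set contains all or none of a clique module. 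Your ``weight equal to the internal deletion cost'' would assign weight $0$ to every clique module and the dissociation algorithm would delete everything for free. Relatedly, your ratio accounting is off: the $2.5$ does not need to come from the dissociation step (the known primal-dual algorithm of Tu and Zhou already gives ratio $2$ there); it comes entirely from the special structures, each being a graph on at most $5$ vertices from which at least $2$ must be deleted, so deleting all of it charges $5/2$. Exhibiting that list and verifying both the ``at least two'' property and the triangle-freeness consequence is the combinatorial heart of the proof, and your proposal leaves it as a placeholder.

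Second, the running time. Your claim that exhaustive application of the reduction rules runs in $O(mn)$ ``by maintaining a dynamic incidence structure against the finitely many forbidden patterns'' will not survive scrutiny: naive detection of the five-vertex patterns costs $O(n^5)$ per round, and even triangle detection in linear time is a well-known open problem, as the paper itself notes. The paper's implementation never searches for the patterns directly. It enumerates all triangles of $\widetilde Q(G[M])$ over \emph{all} strong modules $M$ of the modular decomposition tree (total $O(mn)$), destroys each triangle by explicitly constructing a member of the forbidden family from the three modules plus one or two distinguishing neighbors, and then relies on a structural theorem---a prime quotient of an induced subgraph embeds into the quotient of the smallest strong module containing it---to argue that no new triangles can appear in the quotient after deletions. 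Without some substitute for that theorem, your $O(mn)$ bound is unsupported.
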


Our approach is to reduce the association set problem to the weighted
dissociation set problem.  Given a vertex-weighted graph, the
\emph{weighted dissociation set} problem asks for a set of vertices
with the minimum weight such that its deletion breaks all $P_3$'s,
thereby leaving a graph of maximum degree $1$ or $0$.  This problem
was first studied by
Yannakakis~\cite{yannakakis-81-bipartite-node-deletion}, who proved
that its unweighted version is already NP-hard on bipartite graphs.
Note that a $P_3$ that is not induced must be in a triangle.  Thus, in
triangle-free graphs, the weighted version of the association set
problem is equivalent to the weighted dissociation set problem. It is
easy to observe that for the association set problem, vertices in a
twin class (i.e., whose vertices have the same closed neighborhood)
are either fully contained in or disjoint from a minimum solution.
This observation inspires us to transform the input graph $G$ into a
vertex-weighted graph $Q$ by identifying each twin class of $G$ with a
vertex of $Q$ whose weight is the size of the corresponding twin
class.  We further observe that there are five small graphs such that
if $G$ has none of them as an induced subgraph, then $Q$ either has a
simple structure, hence trivially solvable, or is triangle-free, and
can be solved using the ratio-2 approximation algorithm for the
weighted dissociation set problem
\cite{tu-11-primal-dual-vcp3,tu-11-k-path-vertex-cover}.  From the
obtained solution for $Q$ we can easily retrieve a solution for the
original graph $G$.  Since each of these five graphs has at most five
vertices and at least two of them need to be deleted to make it
$P_3$-free, the approximation ratio 2.5 follows.

The main idea of this paper appears in the argument justifying the
reduction from the (unweighted) association set problem to the
weighted dissociation set problem.  Indeed, we are able to provide a
stronger algorithmic result that implies the aforementioned
combinatorial result.  We develop an efficient algorithm that detects
one of the five graphs in $G$, solves the problem completely, or
determines that $Q$ is already triangle-free.  Our principal tool is
modular decomposition.  A similar use of modular decomposition was
recently invented by the authors~\cite{cao-15-edge-deletion} in
parameterized algorithms.  It is worth noting that the basic
observation on vertex deletion problems to graph properties with
finite forbidden induced subgraphs has been used on both approximation
and parameterized algorithms, by Lund and
Yannakakis~\cite{lund-93-approximation-maximum-subgraph} and by
Cai~\cite{cai-96-hereditary-graph-modification} respectively.

As a final remark, cluster editing has a $2 k$-vertex kernel
\cite{cao-12-kernel-cluster-editing}, while it remains an open problem
to find a linear-vertex kernel for the association set (cluster vertex
deletion) problem.

\section{Preliminaries}
This paper will be only concerned with undirected and simple graphs.
The vertex set and edge set of a graph $G$ are denoted by $V(G)$ and
$E(G)$ respectively.  For $\ell\ge 3$, let $P_\ell$ and $C_\ell$
denote respectively an induced path and an induced cycle on $\ell$
vertices.  A $C_3$ is also called a \emph{triangle}.  For a given set
$\cal F$ of graphs, a graph $G$ is {\em $\cal F$-free} if it contains
no graph in $\cal F$ as an induced subgraph.  When $\cal F$ consists
of a single graph $F$, we use also {\em $F$-free} for short.  For each
vertex $v$ in $V(G)$, its \emph{neighborhood} and \emph{closed
  neighborhood} are denoted by $N_G(v)$ and $N_G[v]$ respectively.

A subset $M$ of vertices forms a \emph{module} of $G$ if all vertices
in $M$ have the same neighborhood outside $M$.  In other words, for
every pair of vertices $u,v \in M$, a vertex $x \not\in M$ is adjacent
to $u$ if and only if it is adjacent to $v$ as well.  The set $V(G)$
and all singleton vertex sets are modules, called \emph{trivial}.  A
graph on at least four vertices is \emph{prime} if it contains only
trivial modules, e.g., a $P_4$ and a $C_5$.  Given any partition
$\{M_1,\dots,M_p\}$ of $V(G)$ such that $M_i$ for every $1\le i\le p$
is a module of $G$, we can derive a $p$-vertex \emph{quotient graph}
$Q$ such that for any pair of distinct $i, j$ with $1\le i,j\le p$,
the $i$th and $j$th vertices of $Q$ are adjacent if and only if $M_i$
and $M_j$ are adjacent in $G$ (every vertex in $M_i$ is adjacent to
every vertex in $M_j$).  One should be noted that a single-vertex
graph and $G$ itself are both trivial quotient graphs of $G$, defined
by the trivial module partitions $\{V(G)\}$ and $\{\{v_1\}, \ldots,
\{v_n\}\}$ respectively.

A module $M$ is \emph{strong} if for every other module $M'$ that
intersects $M$, one of $M$ and $M'$ is a proper subset of the other.
All trivial modules are clearly strong.  We say that a strong module
$M$ different from $V(G)$ is \emph{maximal} if the only strong module
properly containing $M$ is $V(G)$.  (It can be contained by non-strong
modules, e.g., in a graph that is a clique, the maximal strong modules
are simply the singletons, while every subset of vertices is a
module.) The set of maximal strong modules of $G$ partitions $V(G)$,
and defines a special {quotient graph} of $G$, denoted by $\widetilde
Q(G)$.\footnote{If $G$ is a clique or independent set, then
  $\widetilde Q(G)$ is isomorphic to $G$ and is the largest quotient
  graph of $G$; if $\widetilde Q(G)$ is prime, then it is the smallest
  \emph{nontrivial} quotient graph of $G$, both cardinality-wise and
  inclusion-wise (see Lemma~\ref{lem:prime-quotient}).  Otherwise,
  there can be other quotient graph larger or smaller than $\widetilde
  Q(G)$.}  The reader who is unfamiliar with modular decomposition is
referred to the survey of Habib and Paul \cite{habib-10-survey-md} for
more information.  The following proposition will be crucial for our
algorithm.
\begin{proposition}\cite{gallai-67-transitive-orientation,
    sumner-73-indecomposable}
  \label{lem:quotient-graphs}
  If a graph $G$ is connected, then $\widetilde Q(G)$ is either a
  clique or prime.  Any prime graph contains an induced $P_4$.
\end{proposition}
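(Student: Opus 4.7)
The plan is to prove the two assertions in the proposition separately, using the standard connectivity dichotomy for modular decomposition.

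For the first assertion, I would split on whether the complement $\overline G$ is connected. Since $G$ is connected, the connected components of $G$ do not give a useful module partition. In the case that $\overline G$ is disconnected, I would observe that each connected component of $\overline G$ is a module of $G$ (vertices outside such a component $C$ are non-adjacent to no vertex of $C$ in $\overline G$, hence adjacent to every vertex of $C$ in $G$), and moreover each such component is a strong module of $G$: any module $M'$ meeting a component $C$ of $\overline G$ must either lie within $C$ or contain $C$ entirely, since the vertices of $M'\cap C$ have the same $G$-neighborhood outside $M'$. Maximality then forces the components of $\overline G$ to be precisely the maximal strong modules, so $\widetilde Q(G)$ has all pairs of vertices adjacent and is a clique. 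In the case that both $G$ and $\overline G$ are connected, I would show $\widetilde Q(G)$ is prime: any nontrivial module $M$ of $\widetilde Q(G)$ would lift, under the maximal strong module partition, to a proper subset $\hat M\subsetneq V(G)$ that is a union of at least two maximal strong modules and is itself a module of $G$; one verifies it is even a strong module (its intersection with any other module is controlled by the nesting of strong modules), contradicting the maximality of the strong modules it properly contains.

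For the second assertion I would proceed by contrapositive and invoke the well-known cograph characterization: a graph $H$ on at least two vertices is $P_4$-free if and only if $H$ or $\overline H$ is disconnected (applied recursively to each side). Assume $H$ is prime and $P_4$-free. By primality $|V(H)|\ge 4$, and $H$ has no nontrivial module. If $H$ is disconnected with components $C_1,\dots,C_k$, $k\ge 2$, then each $C_i$ is a module of $H$, and since $|V(H)|\ge 4$ at least one $C_i$ is a proper subset of $V(H)$ with at least two vertices, or there are at least three singleton components in which case the union of any two is a nontrivial module; either way we contradict primeness. The complementary case $\overline H$ disconnected is symmetric, using that $H$ and $\overline H$ share the same module family. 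Hence a prime $H$ cannot be $P_4$-free.

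The main obstacle will be the connected case in part one, namely the precise verification that lifting a nontrivial module of the quotient $\widetilde Q(G)$ produces a \emph{strong} module of $G$, not merely a module; this requires a careful check that the lifted set nests properly with every other module of $G$, which in turn uses the fact that the maximal strong modules themselves are pairwise nested-or-disjoint. The remaining verifications are routine consequences of the definitions of modules and strong modules, together with the elementary observation that modules are preserved under complementation.
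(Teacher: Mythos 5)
The paper does not actually prove this proposition---it is imported wholesale from Gallai and Sumner---so there is no in-paper proof to compare against; I am assessing your argument on its own terms. Your skeleton (dichotomy on the connectivity of $\overline G$) is the standard and correct one, your Case A ($\overline G$ disconnected, quotient is a clique) is essentially right, and your derivation of the second assertion from the cograph characterization is sound. The problem is Case B, at precisely the step you yourself flag as the main obstacle: your proposed resolution of it does not work. Having lifted a nontrivial module $\mathcal M$ of $\widetilde Q(G)$ to a module $\hat M$ of $G$ with $M_i\subsetneq \hat M\subsetneq V(G)$ for some maximal strong module $M_i$, you need $\hat M$ to be \emph{strong} in order to contradict the maximality of $M_i$; the paper's definition explicitly allows non-strong modules to sit strictly between a maximal strong module and $V(G)$. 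You assert that strongness of $\hat M$ follows from the maximal strong modules being pairwise nested-or-disjoint. It does not: in a complete multipartite graph with parts $A,B,D$ the maximal strong modules are the pairwise disjoint parts, yet $A\cup B$ is a module properly containing $A$, proper in $V(G)$, and not strong (it overlaps $B\cup D$). That example lives in your Case A, but it shows the inference you rely on is invalid in general---and indeed your Case B never invokes the connectedness of $\overline G$, without which the conclusion is false.

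What is missing is the actual content of Gallai's theorem. One standard route: if two proper modules of $G$ overlap, their union is a module; if that union is $V(G)$, then the two private parts $M_1\setminus M_2$ and $M_2\setminus M_1$ are either completely adjacent or completely nonadjacent to each other, and each is uniformly joined to the rest, forcing $\overline G$ or $G$ to be disconnected. With this lemma, when $G$ and $\overline G$ are both connected the maximal \emph{proper} modules are pairwise disjoint and hence coincide with the maximal strong modules, and your lifted $\hat M$ then contradicts their maximality as proper modules directly, with no detour through strongness. (Equivalently, one can show that a module of $Q$ overlapping $\mathcal M$ lifts to a module of $G$ overlapping $\hat M$ and is itself a union of maximal strong modules, but ruling out such overlaps at the top level again requires exactly this connectivity lemma.) Until you supply it, Case B is unproved. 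A smaller point: you should also say why $\widetilde Q(G)$ has at least four vertices in Case B, since the paper's definition of prime requires this; it does follow (two maximal strong modules would make $G$ a disjoint union or a join, and every three-vertex graph has a nontrivial module), but it needs a sentence.
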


Let $Q$ be a quotient graph of $G$, and let $M$ be a module of $G$ in
the module partition defining $Q$.  By abuse of notation, we will also
use $M$ to denote the corresponding node of $Q$; hence $M\in V(Q)$ and
$M\subseteq V(G)$, and its meaning will be clear from context.
Accordingly, by $N_G(M)$ we mean those vertices of $G$ adjacent to $M$
in $G$, and by $N_Q(M)$ we mean those nodes of $Q$ adjacent to $M$ in
$Q$---note that the union of those vertices of $G$ represented by
$N_Q(M)$ is exactly $N_G(M)$.  Sets $N_G[M]$ and $N_Q[M]$ are
understood analogously.

The weighted versions of the associated set problem and the
dissociation set problem are formally defined as follows.  
\begin{figure}[h]
  \centering
\fbox{\parbox{0.93\linewidth}{

    Associated set
    \\[1mm]
\begin{tabularx}{\linewidth}{rX}
  \textit{Input:} & A vertex-weighted graph $G$.
  \\
  \textit{Task:} & find a subset $X\subset V(G)$ of the minimum weight
  such that every component of $G - X$ is a clique.
\end{tabularx}
}}
\bigskip

\fbox{\parbox{0.93\linewidth}{

Dissociation set
    \\[1mm]
\begin{tabularx}{\linewidth}{rX}
  \textit{Input:} & A vertex-weighted graph $G$.
  \\
  \textit{Task:} & find a subset $X\subset V(G)$ of the minimum weight
  such that  every component of $G - X$ is a single vertex or a single edge.
\end{tabularx}
}}
\end{figure}

Let \opt{asso}{G} and \opt{diss}{G} denote respectively the weights of
minimum association sets and minimum dissociation sets of a weighted
graph $G$.  It is routine to verify that $\opt{asso}{G} \le
\opt{diss}{G}$.  Their gap can be arbitrarily large, e.g., if $G$ is a
clique on $n$ vertices, then $\opt{asso}{G} = 0$ and $\opt{diss}{G} =
n - 2$.  A vertex set $X$ is an association set or a dissociation set
of a graph $G$ if and only if $G - X$ contains no $P_3$ as an induced
subgraph or as a subgraph, respectively.  The following proposition
follows from the fact that every $P_3$ in a $C_3$-free graph is
induced.
\begin{proposition}\label{lem:triangle-free}
  If a graph $G$ is $C_3$-free, then $\opt{asso}{G} = \opt{diss}{G}$.
\end{proposition}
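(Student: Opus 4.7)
The plan is to use the trivial inequality $\opt{asso}{G} \le \opt{diss}{G}$ (which holds in general, since every dissociation set is an association set: a graph of maximum degree $\le 1$ is a disjoint union of cliques of size at most~$2$) and then establish the reverse inequality $\opt{diss}{G} \le \opt{asso}{G}$ under the extra hypothesis that $G$ is $C_3$-free.

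For the reverse direction, I would take any minimum association set $X$ of $G$ and argue directly that $X$ is also a dissociation set. By definition of association set, each component of $G - X$ is a clique. Since $G$ is $C_3$-free, so is its induced subgraph $G - X$. A clique on $3$ or more vertices contains a $C_3$, so each component of $G - X$ must be a clique on at most $2$ vertices, i.e., a single vertex or a single edge. Hence $X$ is a dissociation set, and $\opt{diss}{G} \le w(X) = \opt{asso}{G}$.

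The two inequalities together give $\opt{asso}{G} = \opt{diss}{G}$. There is essentially no obstacle: the proof is a direct unfolding of the definitions together with the observation used in the hint preceding the proposition, namely that in a $C_3$-free graph every $P_3$ is induced (equivalently, every clique has size at most~$2$). The only thing to be slightly careful about is to phrase the argument in the weighted setting, but since $X$ is literally the same vertex set in both interpretations, its weight is the same, so no further accounting is needed.
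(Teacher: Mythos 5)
Your proof is correct and matches the paper's intent: the paper justifies this proposition with the one-line observation that every $P_3$ in a $C_3$-free graph is induced, which is exactly the dual phrasing of your observation that every clique in a $C_3$-free graph has at most two vertices. Your write-up simply unfolds the definitions in both directions, including the weighted accounting, so there is nothing to add.
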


\begin{theorem}[\cite{tu-11-k-path-vertex-cover,tu-11-primal-dual-vcp3}]
  \label{thm:alg-dissociation}
  There is an $O(m n)$-time approximation algorithm of ratio~$2$ for
  the weighted dissociation set problem.
\end{theorem}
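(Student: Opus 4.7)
The plan is to obtain ratio~$2$ via a primal--dual (equivalently, local-ratio) algorithm driven by a strengthened LP relaxation of the weighted dissociation set problem. The naive LP, in which each $P_3$ on $u,v,w$ contributes a constraint $x_u+x_v+x_w\ge 1$, has integrality gap~$3$, so a tighter formulation is the starting point.

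First I would observe the following: for any dissociation set $X$ and any vertex $v$ with $d(v)\ge 2$, either $v\in X$ or at least $d(v)-1$ of $v$'s neighbors lie in $X$, which gives the valid inequality
\[
(d(v)-1)\,x_v+\sum_{u\in N(v)} x_u\;\ge\;d(v)-1.
\]
This ``centered-at-$v$'' inequality dominates all $P_3$-covering inequalities through $v$ and is the key to beating ratio $3$. A direct symmetric calculation on $K_n$ (where the LP value is $n(n-2)/(2n-3)$ while $\opt{diss}{K_n}=n-2$) shows that its integrality gap tends to $2$, so ratio~$2$ is the best one can hope for with this relaxation, and is what I aim to achieve.

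Next I would run a standard primal--dual loop on this LP. For every vertex $v$ of current degree $\ge 2$, introduce a dual $y_v$ and a residual weight $\tilde w_u=w_u-\sum_{v}a(u,v)\,y_v$, where $a(v,v)=d(v)-1$ and $a(u,v)=1$ for $u\in N(v)$. Raise all active $y_v$'s uniformly until some $\tilde w_u$ reaches zero; add $u$ to $X$, delete it, update degrees and the list of active constraints (vertices dropping below degree~$2$ are deactivated), and iterate until the remaining graph is $P_3$-free.

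The main obstacle is the analysis: showing $w(X)\le 2\sum_v (d(v)-1)\,y_v$, which, combined with weak duality $\sum_v(d(v)-1)\,y_v\le \opt{diss}{G}$, yields the ratio. The delicate point is the charging when a deletion of $u$ drops a neighbor $v$ below degree~$2$: one must show that the dual $y_v$ already ``paid for'' is absorbed by $u$ itself (and possibly one surviving neighbor of $v$), so no dollar of dual is over-spent more than twice in total. Once this is established, the approximation guarantee follows. For the running time, locating the saturating vertex and updating residual weights, degrees, and activity per iteration takes $O(m)$; there are at most $n$ iterations, giving the overall $O(mn)$ bound.
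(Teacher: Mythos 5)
You should first note that the paper does not prove Theorem~\ref{thm:alg-dissociation} at all: it is imported as a black box from Tu and Zhou \cite{tu-11-k-path-vertex-cover,tu-11-primal-dual-vcp3}, so there is no in-paper argument to compare yours against. Judged on its own terms, your sketch heads in the same general direction as Tu and Zhou's primal--dual algorithm: the strengthened constraint $(d(v)-1)x_v+\sum_{u\in N(v)}x_u\ge d(v)-1$ is valid (if $v$ survives, at most one of its neighbors may survive), it is indeed the right relaxation for beating the trivial ratio $3$, and your $K_n$ computation correctly shows one cannot hope for better than $2$ from it.

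As a proof, however, your write-up has a genuine gap exactly where the difficulty lies. With complementary slackness, the guarantee reduces to showing that for every constraint $v$ with $y_v>0$ the final solution $X$ carries total coefficient at most $2(d(v)-1)$ in that constraint, where $d(v)$ is the degree of $v$ at the moment $y_v$ was active. This is simply false for an arbitrary run of the loop you describe: if $v\in X$ and all of its neighbors also end up in $X$, the total coefficient is $(d(v)-1)+d(v)=2(d(v)-1)+1$, and for $d(v)=2$ the ratio of the two sides is $3$, not $2$. You flag this as ``the delicate point'' and then write that ``once this is established, the approximation guarantee follows'' --- but establishing it \emph{is} the content of the theorem. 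One needs either a reverse-delete step producing a minimal solution together with a case analysis ruling out the bad configuration for minimal dissociation sets, or a different charging scheme tied to how degrees evolve during the deletions; neither is supplied, and your informal remark that the dual ``is absorbed by $u$ itself and possibly one surviving neighbor of $v$'' does not address the case above. The remaining ingredients (validity of the constraints, weak duality, the $O(m)$-per-iteration and hence $O(mn)$ bookkeeping) are fine, but without the charging lemma the ratio-$2$ claim is not proved.
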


Note that an unweighted graph can be treated as a special weighted
graph where every vertex receives a unit weight.  In this case,
\opt{asso}{G} is the same as the cardinality of the minimum
association set of $G$.

A $\{C_4,P_4\}$-free graph is called a {\em trivially perfect graph}.
A vertex is {\em universal} if it is adjacent to all other vertices in
this graph, i.e., has degree $n -1$. 
\begin{proposition}[\cite{wolk-62, yan-96-trivially-perfect,
    heggernes-07-certifying-fis}]
  \label{lem:trivially-perfect}
  Every connected trivially perfect graph has a {universal vertex}.
  One can in $O(m)$-time either decide that a graph is a trivially
  perfect graph, or detect an induced $P_4$ or $C_4$.
\end{proposition}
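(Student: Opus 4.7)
The plan is to prove the structural claim via a maximum-degree argument, then turn the same argument into a certifying recognition algorithm.

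For the structural claim, let $v$ be a vertex of maximum degree in a connected $\{C_4,P_4\}$-free graph $G$, and suppose toward a contradiction that $v$ is not universal. Take any non-neighbour $u$ of $v$. If the distance from $v$ to $u$ is at least $3$, then the first four vertices along a shortest $v$--$u$ path induce a $P_4$, contradicting the hypothesis; so $v$ and $u$ must share a common neighbour $w$. I would then show that $w$ is adjacent to every $x\in N(v)\setminus\{w\}$ by a simple case analysis on the pair $(xw,xu)$: if both are non-edges, then $\{x,v,w,u\}$ induces a $P_4$; if $xw$ is a non-edge while $xu$ is an edge, then $\{v,x,u,w\}$ induces a $C_4$. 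In the other two cases $xw$ is already an edge. Hence $N(v)\setminus\{w\}\subseteq N(w)$, and since $u,v\in N(w)\setminus(N(v)\setminus\{w\})$, we obtain $|N(w)|\ge|N(v)|+1$, contradicting the maximality of $d(v)$.

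For the algorithmic claim, the combinatorial proof is already constructive. Given a connected graph and a maximum-degree vertex $v$ that is not universal, one pass over $N(v)$, the adjacency list of any non-neighbour $u$ of $v$, and the adjacency list of a common neighbour $w$ either outputs a witness $P_4$ or $C_4$, or certifies $v$ as universal (in which case no such $u$ exists). The recognition algorithm therefore splits $G$ into connected components by BFS, peels off a universal vertex from each component, and recurses on the remaining graph; whenever the peeling step fails, the above routine returns a forbidden induced subgraph.

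The main obstacle is fitting all of this into $O(m)$ time. Running the loop naively requires recomputing a maximum-degree vertex and updating connected components after every deletion, which costs $\Omega(n^2)$ in the worst case. Attaining the $O(m)$ bound requires the careful amortisation and data structures developed in the certifying linear-time recognition algorithm of Heggernes et al., which processes the vertices in a single global sweep (using, for example, partition refinement or modular decomposition) and which I would invoke as a black box for the algorithmic statement; the validity of its witness output is guaranteed by the constructive proof sketched above.
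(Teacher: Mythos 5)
The paper offers no proof of this proposition at all: it is imported as a black box from the cited references (Wolk; Yan, Chen, and Chang; Heggernes and Kratsch), so there is nothing internal to compare against. Your argument for the structural half is correct and self-contained, and is essentially the classical maximum-degree proof: the distance-$\ge 3$ case is ruled out because a shortest path is induced, the case analysis on the pair $(xw,xu)$ exhausts all possibilities and produces an induced $P_4$ or $C_4$ in each bad case, and the final count $|N(w)|\ge |N(v)\setminus\{w\}|+|\{u,v\}|=|N(v)|+1$ is valid since $u$ and $v$ lie in $N(w)$ but not in $N(v)\setminus\{w\}$. This buys the reader an elementary, certificate-producing proof where the paper gives only a citation. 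For the algorithmic half you are right that the naive ``peel off a universal vertex and recurse'' loop is only $O(n^2)$ or worse, and deferring the $O(m)$ bound to the certifying recognition algorithm of Heggernes and Kratsch is exactly what the paper itself does; since the proposition is stated as a cited result, this is a legitimate resolution rather than a gap. The one thing to keep straight if this were to be made fully self-contained is that the witness-extraction routine must be charged carefully (marking $N(u)$ and $N(w)$ once and scanning $N(v)$) so that a single failed peeling step costs $O(m)$ rather than $O(n^2)$, which your sketch already handles.
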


\section{The approximation algorithm}
The association set problem admits a naive $3$-approximation algorithm
\cite{lund-93-approximation-maximum-subgraph}.  It finds an induced
$P_3$ and deletes from $G$ all the three vertices in this $P_3$, and
repeats.  Since any minimum association set has to contain some of the
three vertices, the approximation ratio is at most $3$.  A $P_3$ can
be found in linear time, while the process can be repeated at most
$n/3$ times, and thus the algorithm can be implemented in time $O(m
n)$.  We present here a very simple $2.5$-approximation algorithm,
which runs in a high-order polynomial time, and we will show in the
next section how to implement it in an efficient way to achieve the
running time claimed in Theorem~\ref{thm:alg-approx}.

\begin{figure}[h]
  \centering      \footnotesize
  \subfloat[{$C_4$}]{\label{fig:c4}
    \begin{tikzpicture}[auto=left,every node/.style={fill=blue,circle,blue,draw,inner sep=1pt}, every path/.style={thick},scale=.7]
      \node (a) at (-1,0) {};
      \node (c) at (1,0) {};
      \node (b) at (-1,2) {};
      \node (d) at (1,2) {};
      \draw (a) -- (b) -- (d) -- (c) -- (a);
      \node[white] at (1.25,0) {};
      \node[white] at (-1.25,0) {};
    \end{tikzpicture}
  }
  \qquad
  \subfloat[bull]{\label{fig:bull}
    \begin{tikzpicture}[auto=left,every node/.style={fill=blue,circle,blue,draw,inner sep=1pt}, every path/.style={thick},scale=.8]
      
      \node (a) at (-1,1) {};
      \node (c) at (1,1) {};
      \node (b) at (0,0) {};
      \node (d) at (-1,2) {};
      \node (e) at (1,2) {};
      \draw (a) -- (b) -- (c) -- (a);
      \draw (d) -- (a)  (c) -- (e);
    \end{tikzpicture}
  }
  \qquad
  \subfloat[dart]{\label{fig:dart}
    \begin{tikzpicture}[auto=left,every node/.style={fill=blue,circle,blue,draw,inner sep=1pt}, every path/.style={thick},scale=.8]
      \node (a) at (-1,1) {};
      \node (c) at (1,1) {};
      \node (b) at (0,0) {};
      \node (d) at (-1,2) {};
      \node (e) at (1,2) {};
      \draw (a) -- (b) -- (c) -- (a);
      \draw (d) -- (a) -- (e) -- (c);
    \end{tikzpicture}
  }
  \qquad
  \subfloat[fox]{\label{fig:fox}
    \begin{tikzpicture}[auto=left,every node/.style={fill=blue,circle,blue,draw,inner sep=1pt}, every path/.style={thick},scale=.8]
      
      \node (a) at (-1,1) {};
      \node (c) at (1,1) {};
      \node (b) at (0,0) {};
      \node (d) at (-1,2) {};
      \node (e) at (1,2) {};
      \draw (a) -- (b) -- (c) -- (a);
    \draw (d) -- (a) -- (e) -- (c) -- (d);
  \end{tikzpicture}
  }
  \qquad
  \subfloat[gem]{\label{fig:triangle}
    \begin{tikzpicture}[auto=left,every node/.style={fill=blue,circle,blue,draw,inner sep=1pt}, every path/.style={thick},scale=.8]
      \node (a) at (-1,0.5) {};
      \node (b) at (-.5,0) {};
      \node (c) at (.5,0) {};
      \node (d) at (1,0.5) {};
      \node (e) at (0,2) {};
      \draw (e) -- (a) -- (b) -- (c) -- (d) -- (e);
      \draw (b) -- (e) -- (c);
    \end{tikzpicture}
  }
  \caption{Small subgraphs on 4 or~5 vertices.}
  \label{fig:small-graphs}
\end{figure}
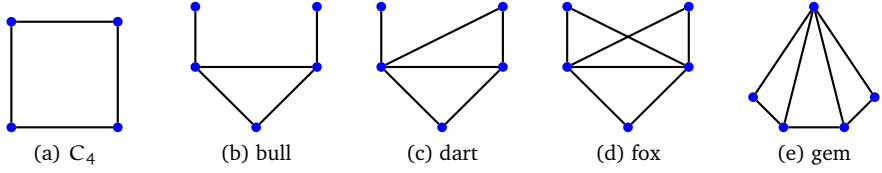

Let ${\cal F}$ denote the set of five small graphs depicted in
Fig.~\ref{fig:small-graphs}, i.e., \{$C_4$, bull, dart, fox, gem\}.  A
quick glance of Fig.~\ref{fig:small-graphs} convinces us that from
each induced subgraph in $\cal F$, at least two vertices need to be
deleted to make it $P_3$-free.
\begin{proposition}\label{lem:small-subgraph}
  Let $X\subseteq V(G)$.  If $G[X]\in \cal F$, then $\opt{asso}{G - X}
  \le \opt{asso}{G} - 2$.
\end{proposition}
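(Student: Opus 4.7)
The plan is straightforward. The inequality says that if we carve out $X$ (which induces one of the five small forbidden graphs), every minimum association set of $G$ must already spend at least~$2$ of its ``budget'' inside $X$, so what remains is an association set of $G-X$ cheaper by at least~$2$.

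First I would verify the local statement that each $F\in\mathcal F$ satisfies $\opt{asso}{F}\ge 2$, i.e.\ no single vertex deletion destroys every induced $P_3$ in $F$. This is a finite case check: for each of $C_4$, bull, dart, fox, gem I would enumerate the (up to symmetry) few choices of a deleted vertex and exhibit a surviving induced $P_3$. For instance, in the bull $abcde$ with triangle $abc$ and pendants $da$, $ce$, deleting $a$ leaves $b$–$c$–$e$, deleting $b$ leaves $d$–$a$–$c$, deleting $d$ leaves the triangle $abc$, and the other two cases are symmetric. Analogous checks work for the remaining four graphs.

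Next I would translate this into the global bound. Fix a minimum association set $S$ of $G$, so $G-S$ is a cluster graph. The induced subgraph $G[X\setminus S]=(G-S)[X\setminus S]$ is therefore also a cluster graph, hence $P_3$-free; equivalently $S\cap X$ is an association set of $G[X]$. Combined with the previous step this forces $|S\cap X|\ge 2$.

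Finally I would observe that $S\setminus X$ is an association set of $G-X$: indeed
\[
(G-X)-(S\setminus X)=G-(S\cup X),
\]
which is an induced subgraph of the cluster graph $G-S$ and hence is itself a cluster graph. Therefore
\[
\opt{asso}{G-X}\;\le\;|S\setminus X|\;=\;|S|-|S\cap X|\;\le\;\opt{asso}{G}-2.
\]
The only real work is the finite case analysis in the first step; the rest is formal manipulation of ``induced subgraph of a cluster graph is a cluster graph.'' In the weighted setting (as implicitly used later) the same argument goes through provided vertex weights are at least $1$, which is the convention used by the reduction.
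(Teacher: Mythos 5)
Your proof is correct and takes essentially the same approach as the paper, which simply asserts from inspection of Fig.~\ref{fig:small-graphs} that each graph in $\cal F$ needs at least two vertex deletions to become $P_3$-free and leaves the budget-counting argument ($|S\cap X|\ge 2$ for any minimum association set $S$, and $S\setminus X$ being an association set of $G-X$) implicit; your write-up just makes this explicit. One small slip in your bull check: deleting $d$ leaves the paw on $\{a,b,c,e\}$, not merely the triangle $abc$, and the surviving induced $P_3$ is $a$--$c$--$e$ (or $b$--$c$--$e$).
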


In polynomial time we can decide whether $G$ contains an induced
subgraph in $\cal F$, and find one if it exists.  We delete all its
vertices if it is found.  If $G$ is not connected, then we work on its
components one by one.  In the rest of this section we may focus on
connected $\cal F$-free graphs.  In such a graph, every nontrivial
module $M$ induces a $\{C_4,P_4\}$-free subgraph: A $P_4$ in $G[M]$,
together with any $v\in N_G(M)$ (it exists because $G$ is connected
and $M$ is nontrivial), makes a gem.  

One may use the definition of modular decomposition to derive the
following combinatorial properties of $\cal F$-free graphs.  But as we
will present a stronger result in the next section that implies this
lemma, its proof is omitted here.

\begin{lemma}\label{lem:quotient-properties}
  Let $G$ be an $\cal F$-free graph that is not a clique, and let $Q =
  \widetilde Q(G)$.  Either $G$ consists of a set of universal
  vertices and two disjoint cliques, or $Q$ is $C_3$-free and the
  following hold for every maximal strong module $M$ of $G$:
  \begin{enumerate}[(1)]
  \item The subgraph $G[M]$ is a cluster graph.  If it is not a
    clique, then $|N_G(M)| = 1$.
  \item If $|N_Q(M)| > 2$, then the module $M$ is trivial (consisting of
    a single vertex of $G$).
  \end{enumerate}
\end{lemma}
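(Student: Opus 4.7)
The plan is to case analyse $Q=\widetilde{Q}(G)$, which by Proposition~\ref{lem:quotient-graphs} is either a clique or prime, using one uniform preliminary observation that the paper has already recorded: every nontrivial module $M$ of $G$ with $N_G(M)\neq\emptyset$ induces a trivially perfect subgraph, because an induced $P_4$ in $G[M]$ combined with any outside neighbour of $M$ would form a gem, while $C_4$'s are forbidden outright. I shall also invoke the fact that $Q$ itself is $\mathcal{F}$-free, since any induced copy of an $\mathcal{F}$-graph in $Q$ lifts to one in $G$ by choosing a representative from each participating module.

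In the \emph{clique case}, $Q=K_k$ with $k\geq 2$ and $G$ is the join of $G[M_1],\dots,G[M_k]$. Since a $C_4$ in a join requires a non-edge on each side, at most one $G[M_i]$ is a non-clique; label it $M_1$ (the all-cliques case makes $G$ a clique, contradiction). The max-strong status of $M_1$ forces the top node of its modular decomposition to be parallel rather than prime (otherwise a prime quotient lifts to an induced $P_4$ inside $G[M_1]$), and each connected component of $G[M_1]$ must itself be a clique, for otherwise an induced $P_3$ in a non-clique component, a vertex from another component, and any vertex outside $M_1$ induce a dart. Hence $G[M_1]$ is a cluster graph with $r\geq 2$ components. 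When $k=2$ the same max-strong/$P_4$ argument compels $M_2$ to be a single vertex, so $Q=K_2$ is $C_3$-free and both conditions hold trivially. When $k\geq 3$, the remaining modules jointly furnish an edge $uv$ between two universal vertices; picking three pairwise non-adjacent common neighbours, one from each of three distinct components of $G[M_1]$, would then produce a fox, so $r=2$ and $G$ lands in the exceptional form of the lemma.

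In the \emph{prime case}, $|V(Q)|\geq 4$ and $Q$ contains an induced $P_4$. I would first prove that $Q$ is $C_3$-free by contradiction: assume a triangle $\{M_1,M_2,M_3\}$ in $Q$; primeness supplies $M_4$ with $t(M_4):=N_Q(M_4)\cap\{M_1,M_2,M_3\}$ a proper nonempty subset, and then supplies an $M_5$ distinguishing the pair inside $\{M_1,\dots,M_4\}$ that $M_4$ left tied (typically $\{M_3,M_4\}$ when $|t(M_4)|=2$, or $\{M_2,M_3\}$ when $|t(M_4)|=1$). A systematic split on $|t(M_4)|$ and on the type of $M_5$ forces the five representatives $v_1,\dots,v_5$ in $G$ to induce bull, dart, $C_4$, or gem in almost every branch; the residual subcases in which the five representatives by themselves induce only an $\mathcal{F}$-free pattern require invoking primeness once more to produce a sixth distinguisher $M_6$ that collapses them back into a branch already handled.

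Given the $C_3$-freeness of $Q$ just established, I would then derive conditions~(1) and~(2) for any maximal strong $M$ with $|M|\geq 2$. The parallel-plus-dart argument again makes $G[M]$ a cluster graph. If $G[M]$ is not a clique (so it has components $A_1,A_2$), then any two distinct $M',M''\in N_Q(M)$ are non-adjacent in $Q$ (else a triangle at $M$), and representatives $x\in M',y\in M''$ combined with $\alpha\in A_1,\gamma\in A_2$ form an induced $C_4$; hence $|N_Q(M)|=1$, which upgrades to $|N_G(M)|=1$ by noting that a second internal vertex of that unique neighbour module, together with $\alpha,\gamma$ and any vertex from another module of $Q$, would reconstruct a $C_4$ or a fox. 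For condition~(2), suppose $|M|\geq 2$ and $|N_Q(M)|\geq 3$; condition~(1) then forces $G[M]$ to be a clique, so any edge $\alpha\beta$ of $G[M]$ together with three representatives $u_1,u_2,u_3$ from three distinct modules of $N_Q(M)$---pairwise non-adjacent by $C_3$-freeness---makes $\{u_1,u_2,u_3,\alpha,\beta\}$ an induced fox, a contradiction. The main obstacle is the case bookkeeping inside the $C_3$-freeness step; a cleaner shortcut I would try first is to prove, by induction on $|V(Q)|$, the sharper statement that every prime $\mathcal{F}$-free graph on at least four vertices is already triangle-free, so that the long split need only be performed on a minimal counterexample.
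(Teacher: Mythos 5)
Your strategy is sound and, in substance, reconstructs the proof that the paper deliberately omits: Lemma~\ref{lem:quotient-properties} is never proved directly but is derived from the correctness argument of the procedure {\sc reduce} (Lemma~\ref{lem:phase-1}), and every forbidden-subgraph construction you use---the $C_4$ from two non-adjacent pairs in joined modules, the dart from a $P_3$ plus a pendant, the gem from a $P_4$ plus a dominating vertex, the fox as the join of an edge with three pairwise non-adjacent common neighbours---is exactly what appears in steps~3.5--3.7 and~4.4--4.8 of that procedure. Your clique-case analysis (co-connectivity of the children of a series node forcing the unique non-clique child to be a disjoint union of cliques, then splitting on $k=2$ versus $k\ge 3$ and $r=2$ versus $r\ge 3$) is correct and arguably cleaner than the paper's steps~4.5.1--4.5.3. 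Two points in your sketch need firming up. First, the prime-case triangle elimination, which you leave open-ended: in fact no ``residual subcase'' and no sixth distinguisher arise. Taking a triangle $\{M_1,M_2,M_3\}$, a distinguisher $M_4$ of some pair, and a distinguisher $M_5$ of the pair left tied by $M_4$, each of the (up to symmetry) eight branches already yields a bull, dart, gem or $C_4$ on the five representatives; this is precisely the content of steps~3.6 and~3.7, so writing out those branches closes the argument. Second, ``the parallel-plus-dart argument again'' does not transfer verbatim to the prime case: the dart's pendant can no longer be taken from another component of $G[M]$ together with a universal vertex, since neither need exist there. Instead one shows that a $P_3$ inside $G[M]$ plus two representatives of distinct neighbour modules gives a $C_4$ (they are non-adjacent by the $C_3$-freeness of $Q$), forcing $|N_Q(M)|=1$, and then the $P_3$ plus a vertex of the unique neighbour $M'$ plus a vertex of a module adjacent to $M'$ but not to $M$ (which exists because a prime $Q$ is connected on at least four vertices) gives the dart. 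The same existence caveat applies to your phrase ``any vertex from another module of $Q$'' in the upgrade from $|N_Q(M)|=1$ to $|N_G(M)|=1$: that vertex must be chosen from a module adjacent to $M'$. With these repairs your argument is complete and constitutes a genuinely self-contained combinatorial proof, at the price of the explicit case enumeration that the paper hides inside the correctness proof of its procedure.
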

In the first case, $G$ has simply two intersecting cliques $C_1$ and
$C_2$, and the problem is trivial: We delete either $C_1\cap C_2$
(i.e., all universal vertices), or one of $C_1\setminus C_2$ and
$C_1\setminus C_2$, whichever the smaller.  Therefore, we focus on the
other case where $\widetilde Q(G)$ is $C_3$-free.  On the other hand,
if some maximal strong module $M$ does not induce a clique in a
connected $\cal F$-free graph $G$, then we can delete the unique
neighbor of $M$ and consider the smaller graph $G - N_G[M]$.  Now that
$G$ is not a clique but every maximal strong module $M$ of $G$ is, we
can define a vertex-weighted graph $Q$ isomorphic to the quotient
graph $\widetilde Q(G)$, where the weight of each vertex in $Q$ is the
number of vertices in the corresponding module, i.e., $|M|$.  We apply
the algorithm of Tu and Zhou~\cite{tu-11-k-path-vertex-cover} to find
a dissociation set of this weighted graph $Q$.  Since $Q$ is
$C_3$-free, by Proposition~\ref{lem:triangle-free} and
Theorem~\ref{thm:alg-dissociation}, the total weight of the obtained
dissociation set is at most $2\opt{diss}{Q} = 2\opt{asso}{Q} =
2\opt{asso}{G}$.  Putting together these steps, an approximation
algorithm with ratio $2.5$ follows (see
Fig.~\ref{fig:approximation-outline}).

\begin{figure}[h!]
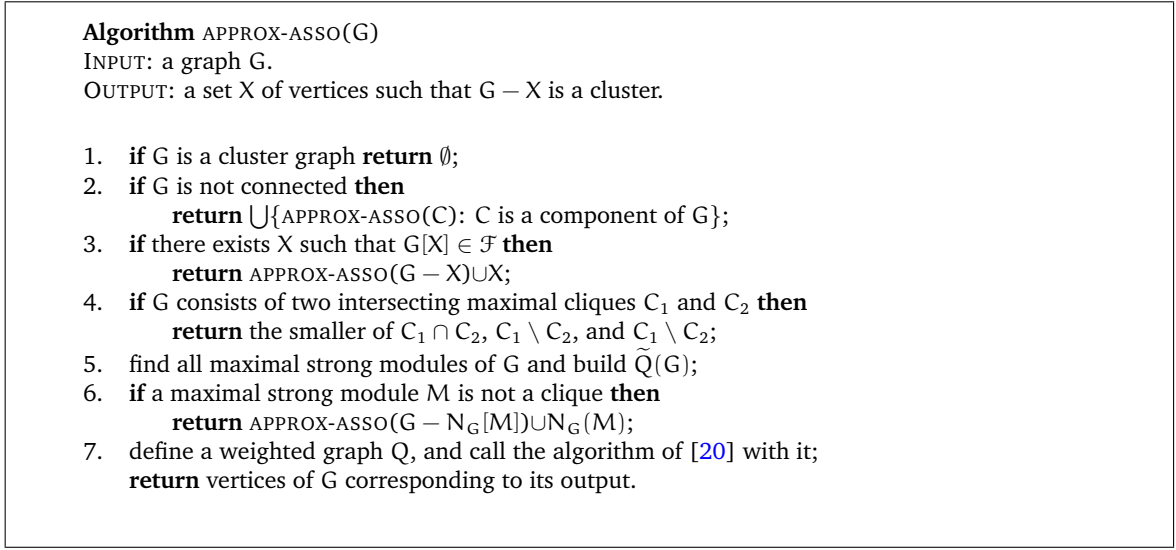

  \vspace*{-5mm}
  \setbox4=\vbox{\hsize28pc \noindent\strut
  \begin{quote}
  \vspace*{-5mm} \small

  {\bf Algorithm \sc approx-asso}($G$)
  \\
  {\sc Input}: a graph $G$.
  \\
  {\sc Output}: a set $X$ of vertices such that $G - X$ is a cluster.

  \begin{tabbing}
    AAA\=AAa\=AAA\=AAA\=AAAAAAAAAAAAAAAAAAAAAAAAAAAA\=A \kill
    1.\> {\bf if} $G$ is a cluster graph {\bf return} $\emptyset$;
    \\
    2.\> {\bf if} $G$ is not connected {\bf then} 
    \\
    \>\>{\bf return} 
    $\bigcup$\{{\sc approx-asso}($C$): $C$ is a component of $G$\};
    \\
    3.\> {\bf if} there exists $X$ such that  $G[X]\in \cal F$ {\bf then}
    \\
    \>\>{\bf return} {\sc approx-asso}($G - X$)$\cup X$;
    \\
    4.\> {\bf if} $G$ consists of two intersecting maximal cliques $C_1$ and $C_2$ {\bf then} 
    \\
    \>\> {\bf return} the smaller of $C_1\cap C_2$, $C_1\setminus C_2$, and
    $C_1\setminus C_2$;
    \\
    5.\> find all  maximal strong modules of $G$ and build $\widetilde Q(G)$;
    \\
    6.\> {\bf if} a maximal strong module $M$ is not a clique {\bf then} 
    \\
    \>\>{\bf return} {\sc approx-asso}($G - N_G[M]$)$\cup N_G(M)$;
    \\
    7.\> define a weighted graph $Q$, and call the algorithm of 
    \cite{tu-11-k-path-vertex-cover} with it;   
    \\
    \> {\bf return} vertices of $G$ corresponding to its output.

  \end{tabbing}
\end{quote} \vspace*{-3mm} \strut} $$\boxit{\box4}$$
\vspace*{-7mm}
\caption{Outline of the approximation algorithm for association set.}
\label{fig:approximation-outline}
\end{figure}

\begin{theorem}
  The output of algorithm {\sc approx-asso}($G$) is an association set
  of the input graph $G$ and its size is at most $2.5 \opt{asso}{G}$.
\end{theorem}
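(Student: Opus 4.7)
The plan is to prove both correctness and the approximation ratio by induction on $|V(G)|$, analyzing each branch of the recursion in {\sc approx-asso}. Correctness (that the returned set is an association set) is immediate from the structure of each branch combined with the inductive hypothesis: if $G-X$ is a cluster graph for the recursive call, so is the corresponding quotient of $G$.

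For the ratio, I would first dispense with the easy branches. Steps~1 and~4 return an optimum (ratio~$1$), and Step~2 splits the problem across components, in each of which the inductive hypothesis gives ratio $2.5$. Step~3 is the key ``budgeted'' step: when $G[X]\in\mathcal{F}$ so $|X|\le 5$, Proposition~\ref{lem:small-subgraph} gives $\opt{asso}{G-X}\le\opt{asso}{G}-2$, and by induction the recursive call returns at most $2.5\opt{asso}{G-X}$ vertices, so the total is at most $|X|+2.5(\opt{asso}{G}-2)\le 5+2.5\opt{asso}{G}-5=2.5\opt{asso}{G}$.

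Step~6 is handled by a short ``free vertex'' argument. Let $M$ be a maximal strong module that is not a clique. By Lemma~\ref{lem:quotient-properties}(1), $G[M]$ is a cluster graph with at least two components, and $N_G(M)=\{v\}$ for a single vertex $v$. Pick $u_1,u_2\in M$ in distinct components of $G[M]$; then $u_1\text{-}v\text{-}u_2$ is an induced $P_3$ in $G$, so every association set of $G$ meets $N_G[M]$, giving $\opt{asso}{G-N_G[M]}\le\opt{asso}{G}-1$. By induction the call returns at most $2.5(\opt{asso}{G}-1)$ vertices, so adding the single vertex $v$ yields at most $1+2.5(\opt{asso}{G}-1)<2.5\opt{asso}{G}$.

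The main obstacle, and the most substantive case, is Step~7. Here $G$ is connected, $\mathcal{F}$-free, not a clique, not two intersecting cliques, and by Lemma~\ref{lem:quotient-properties} every maximal strong module $M$ is a clique and $\widetilde Q(G)$ is $C_3$-free. The argument has two parts. First, I would show $\opt{asso}{G}=\opt{asso}{Q}$ for the weighted quotient $Q$: since each $M$ is a clique and the vertices of $M$ share the same outside neighborhood, they are true twins, and any minimum association set either contains all of $M$ or avoids it entirely (otherwise removing $X\cap M$ from $X$ leaves a smaller association set, because the clique of $G-X$ meeting $M$ simply absorbs the restored vertices). This bijection between minimum association sets of $G$ and of weighted $Q$ gives $\opt{asso}{G}=\opt{asso}{Q}$, and also shows the retrieved vertex set is an association set of $G$. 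Second, since $Q$ is $C_3$-free, Proposition~\ref{lem:triangle-free} yields $\opt{asso}{Q}=\opt{diss}{Q}$, and Theorem~\ref{thm:alg-dissociation} gives a dissociation set of $Q$ of weight at most $2\opt{diss}{Q}=2\opt{asso}{G}\le 2.5\opt{asso}{G}$, which is precisely the weight of the vertex set returned. Combining all branches closes the induction.
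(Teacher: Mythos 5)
Your proposal is correct and follows essentially the same route as the paper's proof: per-branch accounting with Proposition~\ref{lem:small-subgraph} for step~3, the single-neighbor argument for step~6, and the twin-class/quotient equivalence combined with $C_3$-freeness and the ratio-2 dissociation algorithm for step~7. The only cosmetic difference is in step~6, where you derive $\opt{asso}{G-N_G[M]}\le\opt{asso}{G}-1$ directly from the induced $P_3$ through $v$ rather than via the paper's two-sided equality argument; both are valid.
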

\begin{proof}
  We show that the output is an association set of $G$, and for each
  recursive call that puts $X$ into the solution, it holds
  \[
  |X| \le 2.5 (\opt{asso}{G} - \opt{asso}{G - X}).
  \]
  Steps~1 and~2 are straightforward.  Step~3 follows from
  Proposition~\ref{lem:small-subgraph}, and step~4 is easy to verify.
  Now that $G$ is $\cal F$-free, the second case of
  Lemma~\ref{lem:quotient-properties} applies to $\widetilde Q(G)$
  built in step~5.  In particular, the situations dealt by step~6 is
  stated in Lemma~\ref{lem:quotient-properties}(1), where $M$ has a
  single neighbor $v$ in $G$.  On the one hand, together with $v$, any
  association set of $G - N_G[M]$ makes an association set of $G$.
  Hence,
  $$\opt{asso}{G} \le \opt{asso}{G - N_G[M]} + 1.$$  On the other hand,
  suppose there is a minimum association set $X$ of $G$ that does not
  contain $v$, then $X\cap M$ is nonempty (recalling that $M$ is not a
  clique). Since $X\setminus M$ is an association set of $G - M$, we
  have $$\opt{asso}{G} = |X| \ge \opt{asso}{G - M} + 1 \ge
  \opt{asso}{G - N_G[M]} + 1.$$ Therefore, the equality must hold,
  which justifies step~6.

  After the algorithm has passed step~6, (1) the graph $G$ is not a
  clique but every maximal strong module of it is a clique; (2) the
  weighted graph $Q$ defined in step~7, which is isomorphic to
  $\widetilde Q(G)$, is $C_3$-free.  Therefore, the output of the
  algorithm of \cite{tu-11-k-path-vertex-cover} is both a dissociation
  set and an association set of $Q$, and its total weight is at most
  $2 \opt{diss}{Q} = 2 \opt{asso}{Q}$.  It is easy to verify that for
  each clique module of $G$, a minimum association set contains either
  all or none of its vertices.  Therefore, from the association set of
  $Q$ we can retrieve an association set of $G$ with the same size,
  which is at most $2 \opt{asso}{Q} = 2 \opt{asso}{G}$.  This
  concludes the proof.
\end{proof}

\section{An efficient implementation}
We now discuss the implementation issues that lead to the claimed
running time.  A simpleminded implementation of the algorithm given in
Fig.~\ref{fig:approximation-outline} takes $O(n^6)$ time, which is
decided by the disposal of induced subgraphs in $\cal F$ (step~3).  It
is unclear to us how to detect them in a more efficient way than the
$O(n^5)$-time enumeration.  But we observe that what we need are no
more than the conditions stipulated in
Lemma~\ref{lem:quotient-properties}, for which being $\cal F$-free is
sufficient but not necessary.  The following relaxation is sufficient
for our algorithmic purpose: We either detect an induced subgraph in
$\cal F$ or determine that $G$ has already satisfied these conditions.
Once a subgraph is found, we can delete all its vertices and repeat
the process.  In summary, we are after an $O(m n)$-time procedure that
finds a set of subgraphs in $\cal F$ such that its deletion leaves a
graph satisfying the conditions of
Lemma~\ref{lem:quotient-properties}.

Toward this end a particular obstacle is the $C_3$-free condition in
the second case of Lemma~\ref{lem:quotient-properties}.  Indeed, the
detection of triangles in linear time is a notorious open problem that
we are not able to solve.  Therefore, we may have to abandon the
simple ``search and remove'' approach.  The first idea here is that we
may dispose of \emph{all} triangles of $\widetilde Q(G)$ in $O(m n)$
time.  This is, however, still not sufficient, because after deleting
a set $X$ of some vertices, its maximal strong modules change, and
more importantly, $\widetilde Q(G - X)$ may \emph{not} be an (induced)
subgraph of $\widetilde Q(G)$.  Our observation is that $\widetilde
Q(G - X)$ is either a clique, an independent set, or an induced
subgraph of $\widetilde Q(G[M])$ for some (not necessarily maximal)
strong module $M$ of $G$.

We start from recalling some simple facts about modular decomposition.
For each maximal strong module $M$ of $G$, we can further take the
maximal strong modules and the quotient graph $\widetilde Q(G[M])$.
This process can be recursively applied until every module consists of
a single vertex.  If we represent each module used in this process as
a node, and add edges connecting every $M$ with all maximal strong
modules of $G[M]$, we obtain a tree rooted at $V(G)$, called the
\emph{modular decomposition tree} of $G$.  The nodes of the {modular
  decomposition tree} are precisely all strong modules of $G$, where
the leaves are all singleton vertex sets, and for every non-leaf node
$M$, its children are the maximal strong modules of $G[M]$
\cite{gallai-67-transitive-orientation}.  It is known that the
{modular decomposition tree} can be constructed in linear time
\cite{mcconnell-99-modular-decomposition-and-transitive-orientation}.
\begin{proposition}\label{lem:prime-quotient}
  If $\widetilde Q(G)$ is prime, then every nontrivial quotient graph
  of $G$ contains $\widetilde Q(G)$ as an induced subgraph.
\end{proposition}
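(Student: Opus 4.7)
The plan is to reduce the proposition to a classification of the modules of $G$ relative to the maximal strong modules $\widetilde{M}_1,\ldots,\widetilde{M}_k$ of $G$. Define $\pi(M):=\{\widetilde{M}_i:M\cap \widetilde{M}_i\neq\emptyset\}$ for any module $M$ of $G$. I will show (i) that $\pi(M)$ is itself a module of $\widetilde Q(G)$, and (ii) that if $\pi(M)$ is not a singleton then $M=V(G)$. Once these two facts are in hand, the proposition follows by picking one block per maximal strong module.

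For (i), I pick $\widetilde{M}_a,\widetilde{M}_b\in\pi(M)$ and $\widetilde{M}_c\notin\pi(M)$, together with representatives $v_a\in M\cap\widetilde{M}_a$, $v_b\in M\cap\widetilde{M}_b$, and $w\in\widetilde{M}_c$. Since $\widetilde{M}_c$ is disjoint from $M$, we have $w\notin M$, so the module property of $M$ gives $v_a\sim w\iff v_b\sim w$. The module properties of $\widetilde{M}_a,\widetilde{M}_b,\widetilde{M}_c$ translate this to $\widetilde{M}_a\sim\widetilde{M}_c\iff \widetilde{M}_b\sim\widetilde{M}_c$ in $\widetilde Q(G)$, which is exactly the module property of $\pi(M)$ there. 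Primeness of $\widetilde Q(G)$ then forces $\pi(M)$ to be trivial, so either $|\pi(M)|=1$ or $\pi(M)=V(\widetilde Q(G))$.

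The main obstacle is (ii), ruling out $\pi(M)=V(\widetilde Q(G))$ when $M\subsetneq V(G)$. Here I pick $u\in V(G)\setminus M$, say $u\in \widetilde{M}_{i_0}$, and representatives $v_i\in M\cap\widetilde{M}_i$ for each $i\neq i_0$ (they exist because $\pi(M)=V(\widetilde Q(G))$). The module property of $M$ forces all the indicators $[v_i\sim u]$ ($i\neq i_0$) to agree, and via the module properties of $\widetilde{M}_i$ and $\widetilde{M}_{i_0}$ this translates to $\widetilde{M}_{i_0}$ being adjacent in $\widetilde Q(G)$ either to every other maximal strong module or to none of them. Hence $V(\widetilde Q(G))\setminus\{\widetilde{M}_{i_0}\}$ is a module of $\widetilde Q(G)$; since primeness gives $k\ge 4$, it has size $k-1\ge 3$, so it is nontrivial, contradicting primeness.

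With (i) and (ii) in hand, the conclusion is short. Given any nontrivial modular partition $\{M_1,\ldots,M_p\}$ (so $p\ge 2$), each block $M_j$ is a proper subset of $V(G)$ and therefore lies inside a unique maximal strong module $\widetilde{M}_{i(j)}$. The map $j\mapsto i(j)$ is surjective because the $M_j$'s cover $V(G)$ and each $\widetilde{M}_i$ is nonempty. Choosing one $M_{j(i)}\subseteq\widetilde{M}_i$ per $i$, the induced subgraph of $Q$ on $\{M_{j(1)},\ldots,M_{j(k)}\}$ reproduces $\widetilde Q(G)$: adjacency between $M_{j(i)}$ and $M_{j(i')}$ in $Q$ coincides with adjacency between $\widetilde{M}_i$ and $\widetilde{M}_{i'}$ in $\widetilde Q(G)$, because $M_{j(i)}\subseteq \widetilde{M}_i$ and $\widetilde{M}_i$ is itself a module.
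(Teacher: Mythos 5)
Your proof is correct and follows essentially the same route as the paper's: both arguments project a module of $G$ down to a module of $\widetilde Q(G)$, invoke primeness to force that projection to be trivial, conclude that every part of a nontrivial module partition lies inside a single maximal strong module, and then select one part per maximal strong module. The only cosmetic difference is that you establish this containment directly via the map $\pi$ (adding the short extra case ruling out $\pi(M)=V(\widetilde Q(G))$ for a proper module $M$), whereas the paper derives it from the nesting property of strong modules.
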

\begin{proof}
  We argue first that the only module that contains a maximal strong
  module of $G$ is $V(G)$.  Suppose for contradiction that there are a
  maximal strong module $M_1$ and another module $M'$ such that
  $M_1\subset M'\subset V(G)$.  By definition, any maximal strong
  module of $G$ intersecting $M'$ must be a proper subset of $M'$.
  Therefore, the module $M'$ contains at least two but not all the
  maximal strong modules of $G$.  However, then the vertices of
  $\widetilde Q(G)$ corresponding to those modules make a nontrivial
  module of the graph $\widetilde Q(G)$, contradicting that it is
  prime.  Now let $\cal P$ be any nontrivial module partition of $G$
  and let $Q$ be the quotient graph defined by it.  Every module in
  $\cal P$ must be a subset of a maximal strong module, and thus by
  taking a module in $\cal P$ for each maximal strong module, we end
  with $\widetilde Q(G)$ as an induced subgraph of $Q$.
\end{proof}

On the one hand, since $V(G)$ itself is a strong module of $G$, every
vertex set $U\subseteq V(G)$ is contained in some strong module.  On
the other hand, since two strong modules are either disjoint or one
containing the other, there is a unique one that is inclusion-wise
minimal of all strong modules containing $U$.

\begin{theorem}\label{thm:subgraph-quotient}
  Let $U\subseteq V(G)$ be a subset of vertices of $G$, and let $M$ be
  the inclusion-wise minimal strong module of $G$ that contains $U$.
  If $\widetilde Q(G[U])$ is prime, then it is a subgraph of
  $\widetilde Q(G[M])$ induced by those maximal strong modules of
  $G[M]$ that intersects $U$.
\end{theorem}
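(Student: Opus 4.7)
The plan is first to establish that $M$ must be a prime node of the modular decomposition tree of $G$, and then to identify the maximal strong modules of $G[U]$ with the intersections $U_j := M_j \cap U$ taken over the maximal strong modules $M_j$ of $G[M]$. To see $M$ is prime: the minimality of $M$ forces $U$ to meet at least two children of $M$ (else $U$ lies inside a smaller strong module), so if $\widetilde{Q}(G[M])$ were a clique (series node) then the complement of $G[U]$ would be disconnected, making $\widetilde{Q}(G[U])$ a clique; and if it were an independent set (parallel node) then $G[U]$ itself would be disconnected, making $\widetilde{Q}(G[U])$ an independent set. Both contradict the primality hypothesis, so $\widetilde{Q}(G[M])$ is prime.

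Now label the maximal strong modules of $G[M]$ as $M_1,\ldots,M_p$ (each is also a strong module of $G$) and relabel so that $U_j := M_j \cap U$ is non-empty precisely for $j = 1,\ldots,t$; by the above $t \ge 2$. Each $U_j$ is a module of $G[U]$, so the partition $\{U_1,\ldots,U_t\}$ induces a quotient $Q^*$ of $G[U]$ canonically isomorphic to the induced subgraph $\widetilde{Q}(G[M])[\{M_1,\ldots,M_t\}]$. The goal becomes to show $Q^* = \widetilde{Q}(G[U])$ under the natural identification $U_j \leftrightarrow M_j$. Since $\widetilde{Q}(G[U])$ is prime and $Q^*$ is a nontrivial quotient, Proposition~\ref{lem:prime-quotient} embeds $\widetilde{Q}(G[U])$ into $Q^*$ as an induced subgraph; inspecting that proof, each $U_j$ lies inside a single maximal strong module $N_{f(j)}$ of $G[U]$ (otherwise it would span multiple $N_k$'s, producing a nontrivial module in the prime $\widetilde{Q}(G[U])$). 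This yields a surjection $f : \{1,\ldots,t\} \to \{1,\ldots,s\}$, and the theorem reduces to showing $f$ is a bijection, equivalently that $Q^*$ is prime.

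Assume for contradiction that a fiber $S := f^{-1}(k_0)$ has size at least two. Using that $N_{k_0} = \bigsqcup_{j \in S} U_j$ is a module of $G[U]$, one checks that $S$ is a module of $Q^*$: for any $j' \le t$ with $j' \notin S$, the adjacency between $M_{j'}$ and each $M_j \in S$ in $\widetilde{Q}(G[M])$ equals the adjacency between $N_{f(j')}$ and $N_{k_0}$ in $\widetilde{Q}(G[U])$, independently of $j \in S$. But $\widetilde{Q}(G[M])$ is prime and $S$ is a nontrivial proper subset of its vertex set, so some $M_{j''}$ with $j'' > t$ (i.e., $M_{j''} \cap U = \emptyset$) must distinguish two elements of $S$.

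The main obstacle is to convert this distinguishing $M_{j''}$ into an actual contradiction, since $M_{j''}$ lies entirely outside $U$ and does not appear in $G[U]$. The plan is to use the minimality of $M$ recursively: one verifies that $M$ is in fact the minimum strong module of $G$ containing $N_{k_0}$ (because $N_{k_0}$ meets at least two of the $M_j$'s and so fits in no single $M_j$), which opens the door to applying the theorem inductively to $N_{k_0}$ (noting $|N_{k_0}| < |U|$ since the prime $\widetilde{Q}(G[U])$ has $s \ge 4$ vertices). Combined with the distinguishing role of $M_{j''}$ -- which must then be incompatible with the forced structure of $\widetilde{Q}(G[N_{k_0}])$ given by the recursive hypothesis -- this should deliver the desired contradiction and complete the proof.
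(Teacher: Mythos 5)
Your first two paragraphs are correct and already contain the paper's entire proof. The sets $U_j = M_j\cap U$ form a module partition of $G[U]$ whose quotient $Q^*$ is isomorphic to the subgraph of $\widetilde Q(G[M])$ induced by the maximal strong modules meeting $U$, and Proposition~\ref{lem:prime-quotient} embeds the prime graph $\widetilde Q(G[U])$ into $Q^*$ as an induced subgraph. That containment is all the paper's own (three-line) proof establishes, and it is all the theorem is ever used for: in the proof of Lemma~\ref{lem:phase-1} it only serves to lift a triangle of a prime $\widetilde Q(G-X)$ to a triangle of some $\widetilde Q(G[M'])$. Your preliminary observation that $\widetilde Q(G[M])$ must itself be prime is also correct, just not needed for this.

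The last two paragraphs, where you try to upgrade the embedding $f$ to a bijection, are where the proposal fails --- and not only because the final step is left at ``this should deliver the desired contradiction.'' The equality you are after is simply false: $f$ need not be injective, so $\widetilde Q(G[U])$ can be a \emph{proper} induced subgraph of $Q^*$. Take $G$ to be the prime graph on $\{1,\dots,6\}$ with edges $12,13,24,34,45,26$, and $U=\{1,2,3,4,5\}$. Then $M=V(G)$, every $M_j$ is a singleton, and $Q^*=G[U]$ has five vertices; but $\{2,3\}$ is a module of $G[U]$ (only vertex $6$, which lies outside $U$, distinguishes $2$ from $3$), so $\widetilde Q(G[U])$ is the $P_4$ on $1,\{2,3\},4,5$, a proper induced subgraph of $Q^*$. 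Here the fiber $S=\{2,3\}$ has size two, exactly the situation you try to refute, and no contradiction exists; your proposed recursion cannot even start, since $\widetilde Q(G[\{2,3\}])$ is an independent set on two vertices and hence not prime. The fix is to stop after your second paragraph and read the theorem's conclusion as ``is an induced subgraph of,'' which is what the paper proves and uses.
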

\begin{proof}
  Let $\{M_1, M_2, \ldots, M_p\}$ be the maximal strong modules of the
  subgraph $G[M]$ that intersect $U$.  By definition, for every $1\le
  i\le p$, the set $U\cap M_i$ is a nonempty module of $G[U]$.
  Therefore, $\{U\cap M_1, U\cap M_2, \ldots, U\cap M_p\}$ gives a
  module partition of $U$.  By Proposition~\ref{lem:prime-quotient},
  $\widetilde Q(G[U])$ is an induced subgraph of the quotient graph of
  $G[U]$ defined by this partition.
\end{proof}

We remark that if $\widetilde Q(G[U])$ is a clique or independent set,
then it is not necessarily an induced subgraph of $\widetilde
Q(G[M])$; see, e.g., Fig.~\ref{fig:subgraphs}.

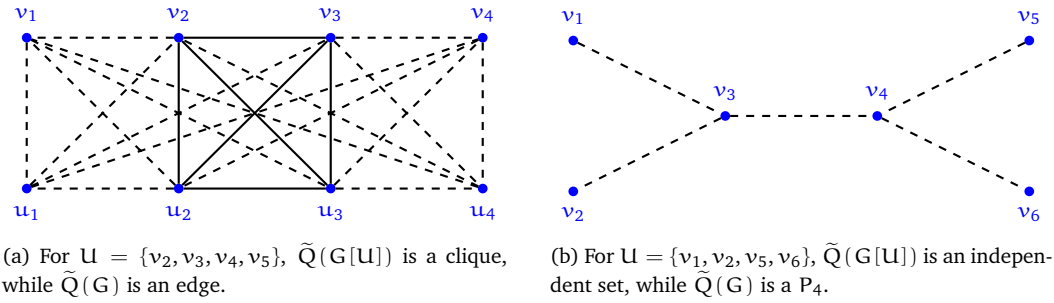
\begin{figure}[h]
  \centering      \footnotesize

  \subfloat[For $U = \{v_2, v_3, v_4, v_5\}$, $\widetilde Q(G[U{]})$
  is a clique, while $\widetilde Q(G)$ is an edge.]{\label{fig:clique}
    \begin{tikzpicture}[auto=left,every node/.style={fill=blue,circle,blue,draw,inner sep=1pt}, every path/.style={thick},scale=1]
      \node[label=above:$v_1$] (v1) at (-3,1) {};
      \node[label=above:$v_2$] (v2) at (-1,1) {};
      \node[label=above:$v_3$] (v3) at (1,1) {};
      \node[label=above:$v_4$] (v4) at (3,1) {};
      \node[label=below:$u_1$] (u1) at (-3,-1) {};
      \node[label=below:$u_2$] (u2) at (-1,-1) {};
      \node[label=below:$u_3$] (u3) at (1,-1) {};
      \node[label=below:$u_4$] (u4) at (3,-1) {};
      \draw[dashed] (u1) -- (v3) -- (v4) -- (u2) -- (u1) -- (v2) -- (v1) -- (u3) -- (u4) -- (v2);
      \draw[dashed] (u1) -- (v4) -- (u4) -- (v1) -- (u1) (u2) -- (v1) (u3) edge (v4) (v3) edge (u4);
      \draw (v2) -- (u2) -- (u3) -- (v3) -- (v2) -- (u3) (v3) edge (u2);
    \end{tikzpicture}    
  }
  \qquad
  \subfloat[For $U = \{v_1, v_2, v_5, v_6\}$, $\widetilde Q(G[U{]})$ is
  an independent set, while $\widetilde Q(G)$ is a
  $P_4$.]{\label{fig:edgeless}
    \begin{tikzpicture}[auto=left,every node/.style={fill=blue,circle,blue,draw,inner sep=1pt}, every path/.style={thick},scale=1]
      \node[label=above:$v_1$] (v1) at (-3,1) {};
      \node[label=below:$v_2$] (u1) at (-3,-1) {};
      \node[label=above:$v_3$] (v2) at (-1,0) {};
      \node[label=above:$v_4$] (v3) at (1,0) {};
      \node[label=above:$v_5$] (v4) at (3,1) {};
      \node[label=below:$v_6$] (u4) at (3,-1) {};
      \draw[dashed] (v1) -- (v2) -- (u1) (v4) -- (v3) -- (u4) (v2) -- (v3);
    \end{tikzpicture}    
    }
    \caption{$\widetilde Q(G[U])$ may not be an induced subgraph of
      $\widetilde Q(G)$.}
  \label{fig:subgraphs}
\end{figure}

\begin{figure}[h!]
  \vspace*{-5mm}
  \setbox4=\vbox{\hsize28pc \noindent\strut
  \begin{quote}
  \vspace*{-5mm} \small

  {\bf Procedure \sc reduce}($G$)
  \\
  {\sc Input}: a graph $G$.
  \\
  {\sc Output}: a set of vertices specified in Lemma~\ref{lem:phase-1}.

  \begin{tabbing}
    AA\=Aa\=AAA\=AAA\=MMMMMMMMMMAAAAAAAAAAAAAA\=A \kill
    0.\> $X \leftarrow \emptyset$; {\bf if} $G$ is a clique {\bf then 
      return} $\emptyset$;
    \\
    1.\> build the modular decomposition tree of $G$;
    \\
    2.\> find all triangles $\cal C$ of $\widetilde Q(G[M])$ for all strong 
    modules of $G$;
    \\
    3.\> {\bf while} $\cal C$ is nonempty {\bf do}
    \\
    3.0.\>\> take a triangle $\{M_1, M_2, M_3\}$ from $\cal C$;
    \\
    3.1.\>\> let $M$  be the  parent of 
    $\{M_1, M_2, M_3\}$ and let $Q = \widetilde Q(G[M])$;
    \\
    3.2.\>\> {\bf if} there is $1\le i\le 3$ such that 
    $M_i \subseteq X$ {\bf then}
    \\
    \>\>\> delete $\{M_1, M_2, M_3\}$ from $\cal C$;
    {\bf continue};
    \\
    3.3.\>\> {\bf if} there are $1\le i < j\le 3$ such that
    $M_i,M_j$ were merged {\bf then}
    \\
    \>\>\> delete $\{M_1, M_2, M_3\}$ from $\cal C$;
    {\bf continue};
    \\
    3.4.\>\> {\bf if} there are $1\le i < j\le 3$ such that 
    $N_Q[M_i] =  N_Q[M_j]$ {\bf then} 
    \\
    \>\>\> merge $M_i$ and $M_j$ into a single module; 
    \\
    \>\>\> delete $\{M_1, M_2, M_3\}$ from $\cal C$;
    {\bf continue};
    \\
    3.5.\>\> find $M'\in V(Q)$ that is adjacent to only one of $M_1$ and $M_2$; 
    \\
    \>\> $\backslash\!\!\backslash$
    {\em Assume that $M'$ is adjacent to $M_2$ but not $M_1$.}
    \\
    3.6.\>\> {\bf if} $M'$ is adjacent to $M_3$ {\bf then} 
    \\
    \>\>\> find $M''\in V(Q)$ adjacent to only one of $M_2$ and $M_3$; 
    \\
    \>\>\> find  gems or darts or $C_4$ and move their vertices to $X$; 
    \\
    3.7.\>\> {\bf else} \comment{$M'$ is nonadjacent to $M_3$.}
    \\
    \>\>\> find $M''\in V(Q)$ adjacent to only one of $M_1$ and $M_3$; 
    \\
    \>\>\> find darts or bulls or $C_4$ or gem and move their vertices to $X$; 
    \\
    4.\> {\bf repeat} until $X$ is unchanged {\bf do} 
    \\
    4.0. \>\>  $G \leftarrow G - X$; 
    \comment{ All modules below are maximal strong modules of $G$.}
    \\
    4.1. \>\> {\bf if} $G$ is not connected {\bf then} 
    \\
    \>\>\>{\bf return} 
    $\bigcup$\{{\sc reduce}($C$): $C$ is a component of $G$\}$\cup X$;
    \\
    4.2. \>\> {\bf if} $G$ is a clique or consists of two intersecting cliques 
    {\bf then return} $X$;
    \\
    4.3.\>\>let $Q$ denote $\widetilde Q(G)$;
    \comment{$Q$ is a clique or $C_3$-free.}
    \\
    4.4.\>\> {\bf if} two non-clique  modules are adjacent in $Q$ {\bf then}
    \\
    \>\>\> find a $C_4$ of $G$ and move its vertices to $X$;
    \\
    4.5.\>\> {\bf else if} $Q$ is a clique but not an edge {\bf then} 
    \\
    4.5.0.\>\>\> let $M$ be the only nontrivial module;
    \comment{It exists because $G$ is not a clique.}
    \\
    4.5.1.\>\>\> {\bf if} $G[M]$ has a $P_4$ or $C_4$ {\bf then} 
    \\
    \>\>\>\> find a gem or $C_4$ of $G$ and move its vertices to $X$;
    \\
    4.5.2.\>\>\> {\bf else if} $G[M]$ has two components {\bf then} 
    \\
    \>\>\>\> find a dart of $G$ and move its vertices to $X$;
    \\
    4.5.3.\>\>\> {\bf else} find a fox of $G$ and move its vertices to $X$;
    \\
    4.6.\>\> {\bf else if} a module $M$ is not a cluster {\bf then} 
    \\
    4.6.1.\>\>\> {\bf if} $G[M]$ has a $P_4$ or $C_4$ {\bf then} 
    \\
    \>\>\>\> find a gem or $C_4$ of $G$ and move its vertices to $X$;
    \\
    4.6.2.\>\>\> {\bf else if} $G[M]$ is not connected {\bf then} 
    \\    
    \>\>\>\> find a dart of $G$ and move its vertices to $X$;
    \\
    4.6.3.\>\>\> {\bf else} find a dart of $G$ and add its vertices to $X$;
    \comment{$Q$ is not a clique.}
    \\
    4.7.\>\> {\bf else if} a non-clique module $M$ has two neighbors $M_1,M_2$ 
    {\bf then} 
    \\
    \>\>\> find a $C_4$ of $G$ and move its vertices to $X$;
    \\
    4.8.\>\> {\bf else if} a nontrivial module $M$ has neighbors $M_1$, $M_2$, 
    $M_3$ {\bf then} 
    \\
    \>\>\> find a fox of $G$ and move its vertices to $X$;
    \\
    5.\> {\bf return} $X$.
  \end{tabbing}
\end{quote} \vspace*{-3mm} \strut} $$\boxit{\box4}$$
\vspace*{-7mm}
\caption{Procedure for the first phase.}
\label{fig:alg-small-subgraph}
\end{figure}

We are now ready to present the efficient implementation for the first
phase, which would replace the first three steps of algorithm {\sc
  approx-asso} (Fig.~\ref{fig:approximation-outline}).  
\begin{lemma}\label{lem:phase-1}
  In $O(m n)$ time we can find a set $\cal H$ of disjoint induced
  subgraphs of $G$ such that each $H\in \cal H$ is in $\cal F$ and $G
  - \bigcup_{H\in \cal H} V(H)$ satisfies the conditions of
  Lemma~\ref{lem:quotient-properties}.
\end{lemma}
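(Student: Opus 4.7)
The plan is to argue that the procedure \textsc{reduce} of Fig.~\ref{fig:alg-small-subgraph} outputs a set $X = \bigcup_{H\in\cal H} V(H)$ where $\cal H$ is a collection of disjoint induced subgraphs from $\cal F$ and $G - X$ satisfies both clauses of Lemma~\ref{lem:quotient-properties}, and then to verify the $O(mn)$ time bound. The procedure has two phases: step~3 eliminates triangles from $\widetilde Q(G - X)$, and step~4 enforces the structural conditions on maximal strong modules of the residual graph.

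I would first justify the triangle-elimination phase. The crucial observation is that by Proposition~\ref{lem:quotient-graphs} combined with Theorem~\ref{thm:subgraph-quotient}, any triangle of a prime quotient graph arising during execution originates as a triangle of some $\widetilde Q(G[M])$ for a strong module $M$ of the original $G$; hence the one-shot enumeration in step~2 is complete and need not be refreshed. For each candidate $\{M_1,M_2,M_3\}$, the filters in steps~3.2–3.4 safely discard triangles broken by prior deletions or merges, and absorb triangles whose modules have become pairwise indistinguishable via equal closed $Q$-neighborhoods. In the remaining case, since none of the three pairs share a closed neighborhood, some distinguishing vertex $M'$ exists in $V(Q)\setminus\{M_1,M_2,M_3\}$ (step~3.5). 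A short case analysis on the adjacency pattern of $M'$, together with a second distinguisher $M''$, exhibits an induced subgraph in $\cal F$ inside $Q$: if $M'$ is adjacent to both $M_2$ and $M_3$ (step~3.6), then $\{M_1,M_2,M_3,M'\}$ already induces a dart, and extending by $M''$ gives a $C_4$ or gem; otherwise (step~3.7) one obtains a bull, dart, $C_4$, or gem. Choosing one representative vertex per module preserves adjacency by the definition of a quotient graph, so the structure lifts to an induced subgraph in $G$.

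Next I would argue correctness of step~4, which enforces clauses~(1) and~(2) of Lemma~\ref{lem:quotient-properties}. After step~3 completes, $\widetilde Q(G - X)$ is either a clique or $C_3$-free. Every remaining violation falls into a uniform local pattern: two non-clique modules adjacent in $Q$ (step~4.4) induce a $C_4$; a nontrivial module whose induced subgraph contains a $P_4$ or $C_4$ (steps~4.5.1, 4.6.1) combined with an external neighbor yields a gem or $C_4$; a disconnected nontrivial module with an external neighbor (steps~4.5.2, 4.6.2) yields a dart; a clique-$Q$ with one nontrivial cluster module (step~4.5.3) yields a fox via two external vertices; and a nontrivial module with three or more $Q$-neighbors (step~4.8) yields a fox. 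Since each disposal removes four or five vertices and the outer \textbf{repeat} loop makes $X$ monotonically grow, termination happens in $O(n)$ iterations.

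For the running time, the modular decomposition tree is built in linear time by~\cite{mcconnell-99-modular-decomposition-and-transitive-orientation}, and over all strong modules $M$ the sizes of $\widetilde Q(G[M])$ sum to $O(n)$ vertices and $O(m)$ edges. Triangle listing in step~2 takes $O(mn)$ by scanning neighbor-pairs of each vertex. Each processed triangle costs $O(1)$ amortized via pointers to the merge structure, except when a subgraph is moved to $X$, which costs time proportional to its neighborhood and is charged to the $\Theta(1)$ vertices deleted. Step~4 runs $O(n)$ iterations, each dominated by a linear-time scan. The main obstacle I anticipate is the careful bookkeeping in steps~3.2–3.4: maintaining the merge relation on modules and the ``already in $X$'' membership under a dynamically shrinking graph so that stale triangles are dismissed in $O(1)$, and ensuring that the detection routines in steps~3.6–3.7 and 4.4–4.8 can each locate their respective distinguishing vertices and small subgraphs within the global $O(mn)$ budget.
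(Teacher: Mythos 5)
Your proposal follows essentially the same route as the paper: the same two-phase procedure, the same appeal to Theorem~\ref{thm:subgraph-quotient} to argue that the one-shot triangle enumeration over all $\widetilde Q(G[M])$ suffices and that the residual quotient graph is a clique or $C_3$-free, the same case analyses in steps~3 and~4, and the same charging scheme for the $O(mn)$ bound. Most of this is sound and matches the paper's argument.

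There is, however, one concrete error in your step~3.6 analysis. You claim that when $M'$ is adjacent to $M_2$ and $M_3$ but not $M_1$, the set $\{M_1,M_2,M_3,M'\}$ ``already induces a dart.'' It does not: a dart has five vertices, and these four induce a diamond ($K_4$ minus an edge), which is \emph{not} in $\cal F$ --- deleting a single degree-two vertex of a diamond leaves a triangle, so $\opt{asso}{\cdot}=1$ for it. This matters because Lemma~\ref{lem:phase-1} requires every $H\in\cal H$ to lie in $\cal F$, i.e., to need at least two deletions; putting a diamond into $\cal H$ would break the factor-$2.5$ accounting downstream. The correct analysis (as in the paper) is that the second distinguisher $M''$ (adjacent to exactly one of $M_2,M_3$) is always needed, and the adjacency of $M''$ to $M_1$ and $M'$ decides the outcome: adjacent to neither gives a dart, to exactly one a gem, and to both a $C_4$ --- all on the full five-module set. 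A smaller imprecision of the same flavor appears in your step~4 summary: a disconnected nontrivial module yields a dart only when at least one of its components contains a $P_3$ (which the paper guarantees by noting that step~4.2 did not fire, resp.\ that $M$ is not a cluster); a disconnected module all of whose components are cliques gives no dart. With these subcases repaired, your argument coincides with the paper's proof.
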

\begin{proof}
  We use the procedure described in Fig.~\ref{fig:alg-small-subgraph}.
  Step~0 is trivial.  Step~1 uses the algorithm of McConnell and
  Spinrad~\cite{mcconnell-99-modular-decomposition-and-transitive-orientation},
  and step~2 uses simple enumeration.  This leads us to the disposal
  of triangles in step~3.  During its progress, a maximal strong
  module $M$ of the input graph $G$ may not remain a maximal strong
  module of the current graph (i.e., $G - X$).  But if $M$ is not
  completely deleted (i.e., $M\not\subseteq X$), then its remnant
  (i.e., $M\setminus X$) is always a module of $G - X$.

  Note that the three modules in each triangle must have the same
  parent in the modular decomposition tree.  For each triangle $\{M_1,
  M_2, M_3\}$, we focus on their parent $M$ (in the modular
  decomposition tree) and the subgraph $G[M]$ (step~3.1).  All the
  modules mentioned in steps 3.2--3.7 are maximal strong modules of
  subgraph $G[M]$; they correspond to $V(Q)$.

  If either of the conditions of steps~3.2 and~3.3 is true, then the
  triangle has been disposed of and we can continue to the next one.
  If the deletion of vertices in previous iterations has made
  $N_Q[M_i] = N_Q[M_j]$ for some $1\le i < j\le 3$, then $M_i\cup M_j$
  is a module of $G[M\setminus X]$.  Note that after they are merged,
  both $M_i$ and $M_j$ refer to the new module.  Now that the
  procedure has passed steps~3.2--3.4, for each $1\le i < j\le 3$, we
  can find a module adjacent to only one of $M_i$ and $M_j$.  This
  justifies step~3.5, and we may assume without loss of generality
  that the module $M'$ is adjacent to $M_2$ but not $M_1$; the other
  case can be dealt with a symmetric way, which is omitted.  In
  step~3.6, depending on the adjacency between $M''$ and $M_1, M'$, we
  are in one of the following three cases:
  \begin{inparaitem}
  \item if $M''$ is adjacent to  neither of $M_1, M'$, then there is a dart;
  \item if $M''$ is adjacent to precisely one of $M_1, M'$, then there
    is a gem; or
  \item otherwise ($M''$ is adjacent to both of $M_1, M'$), there is a
    $C_4$;
  \end{inparaitem}
  This forbidden subgraph can be constructed by taking one vertex from
  each of $M_1, M_2, M_3, M'$, and $M''$.  We can actually find $\min
  \{|M_1|, |M_2|, |M_3|, |M'|, |M''|\}$ number of gems or darts, or
  $\min \{|M_1|, |M_3|, |M'|, |M''|\}$ number of $C_4$'s, which we all
  move into $X$.  It is similar for step~3.7.

  After step~3, $G$ might become disconnected.  Then we work on its
  components one by one.  Steps~4.1 and~4.2 are simple.  The fact that
  the quotient graph $Q$ built in step~4.3 is either a clique or is
  $C_3$-free can be argued using Theorem~\ref{thm:subgraph-quotient}.
  Suppose for contradiction that $Q$ is prime but contains a $C_3$.
  Then by Theorem~\ref{thm:subgraph-quotient}, it is a subgraph of
  $\widetilde Q(G[M])$ for some strong module $M$ of $G$.  Let $\{M_1,
  M_2, M_3\}$ be the triangle of $\widetilde Q(G[M])$ corresponding a
  triangle in $Q$.  But in step~3, either one of $\{M_1, M_2, M_3\}$
  has been completely put into $X$, or two of them have been merged
  (then unless $Q$ is a clique or an independent set, they will always
  be in the same maximal strong module).  Therefore, $Q$ must be
  $C_3$-free if it is not a clique.

  Note that the algorithm enters at most one of steps~4.4--4.8.  The
  correctness of step~4.4 is clear.  If $Q$ is a clique and it passes
  step~4.4, then all but one maximal strong module are trivial.  (Note
  that each universal vertex is a maximal strong module.)  A $P_4$ of
  $M$ together with a vertex in $N_G(M)$ makes a gem (4.5.1).  Now
  that $G[M]$ is $\{P_4, C_4\}$-free, and has no universal vertex (a
  universal vertex of $G[M]$ is a universal vertex of $G$ as well),
  according to Proposition~\ref{lem:trivially-perfect}, $G[M]$ is
  disconnected.  Since step~4.2 does not apply, in step~4.5.2, at
  least one component is not a clique, and has a $P_3$, which,
  together with a vertex $u\in N_G(M)$ and any vertex from another
  component of $G[M]$, makes a dart.  Otherwise (step~4.5.3), $G[M]$
  has at least three components, and we can find a fox by taking three
  vertices from different components of $G[M]$ and two vertices from
  $N_G(M)$ (recall that when it enters step~4.5, $G$ must have at
  least two universal vertices).  Step~4.6 can be argued similar as
  step~4.5, where one should note that if there are only two modules
  then $G[M]$ cannot have universal vertices.  After them, $Q$ is
  always $C_3$-free.  Therefore, modules $M_1, M_2$ in step~4.7 and
  modules $M_1, M_2, M_3$ in step~4.8 are (pairwise) nonadjacent.

  If $G$ consists of two intersecting cliques, the procedure returns
  at step~4.2.  Hence we may assume that it is not the case.  The
  quotient graph $\widetilde Q(G)$ is $C_3$-free because
  Theorem~\ref{thm:subgraph-quotient} and the algorithm has passed
  step~4.5.  Conditions (1) and (2) of
  Lemma~\ref{lem:quotient-properties} follow from the correctness
  argument for steps~4.6--4.8.  We now calculate the running time of
  the procedure.  Note that the total number of edges of the subgraphs
  induced the strong modules of $G$ is upper bounded by $m$.  Thus,
  all the triangles can be listed in $O(m n)$ time in step~2.  Each
  iteration of step~3 takes $O(m)$ time, and it decreases the order of
  $Q$ by at least one, and thus step~3 takes $O(m n)$ time in total.
  Each iteration of step~4 takes $O(m)$ time, and it decreases the
  order of $G$ by at least one, and hence step~4 takes $O(m n)$ time
  in total.  This concludes the proof of this lemma.
\end{proof}

{
  \small 

}
\end{document}